\def\be{\begin{equation}}
\def\ee{\end{equation}}
\newtheorem{theorem}{Theorem}
\title{
\begin{center}
On
the  Nelson-Seiberg theorem:\\
generalizations 
and counter-examples
\end{center}}
\author[a]{Antonio Amariti}
\author[b]{\!\!, Dario Sauro} 
\affiliation[a]{INFN, Sezione di Milano, Via Celoria 16, I-20133 Milano, Italy}
\affiliation[b]{Dipartimento di Fisica, Universit\`a di Milano, \\  Via Celoria 16, I-20133 Milano, Italy}
\emailAdd{antonio.amariti@mi.infn.it,dario.sauro@studenti.unimi.it}
\abstract{
In this note we elaborate on a recent counter-example to the Nelson-Seiberg theorem and to its generalizations. 
We provide sufficient conditions for the existence of such counter-examples, finding new ones. 
We claim that these counter-examples are connected with the presence of pairs of fields with opposite R-charge getting a VEV.
Furthermore we comment on the ”non-genericity” of our models, showing that none of them violates the original Nelson-Seiberg theorem.}
\begin{document}

\maketitle

\section{Introduction}

The Nelson Seiberg theorem \cite{Nelson:1993nf}  is a useful result for model building.
This theorem states that a 4d $\mathcal{N}=1$ WZ model must have an R-symmetry to break SUSY and that a spontaneously broken R-symmetry is a sufficient condition for breaking SUSY. 
There are many extensions and generalizations of this result (see \cite{Intriligator:2006dd,Ray:2007wq,Ferretti:2007rq,Ferretti:2007ec,Intriligator:2007py,Shih:2007av,Intriligator:2008fe,
Sun:2008va,Giveon:2009bv,Amariti:2009kb,Komargodski:2009jf,Sun:2011fq,Azeyanagi:2012pc,Curtin:2012yu,Kang:2012fn,Liu:2014ida,Sun:2018hnk} for an incomplete 
list of results).

Among these extensions here we focus on 
the revised versions of the theorem, that have been proposed in  \cite{Sun:2011fq} and \cite{Kang:2012fn} 
motivated by the asymmetry in the necessary and sufficient condition in the original theorem.
The revised versions of the theorem state that SUSY is broken if and only if there are more superfields $\Phi$
with  $R[\Phi]= 2$  than superfields $\Psi$ with $R[\Psi]= 2$.

The various derivations of the theorem share the same milestone,
  the assumption of calculability and of genericity of the superpotential. 
These assumption are motivated as follows: WZ models are derived as low-energy limits of gauge theories, 
where the gauge dynamics has been integrated out. 
It is then natural to assume that 
all the possible renormalizable interactions,  compatible with the symmetries of the model,
are generated in the WZ models and that the couplings are generic without any fine-tuning.

Observe that actually in the revised version a non generic issue arises, i.e. the requirement 
of having more fields with  R-charge 2  than fields with R-charge 0.
In spite of these results, recently, a counter-example to the revised Nelson-Seiberg theorem has been obtained in \cite{Sun:2019bnd}.
This is a model with one field with R-charge 2 and no field with vanishing R-charge.
The model displays a moduli-space of degenerate supersymmetric vacua, while according to the revised versions of NS theorem it should break SUSY.

In this paper we elaborate on this result,  providing sufficient conditions to generate counter-examples. 
By a deep analysis of the original counter-example of \cite{Sun:2019bnd} we generalize the construction
adding more fields and interactions.
Looking at many examples we then formulate a general statement, claiming the existence of sufficient conditions
for generating such counter-examples.
We complete our analysis discussing the notion of genericity applied to our models.

The paper is organized as follows.
In section \ref{NS-section} we review both the original version the NS theorem \cite{Nelson:1993nf} and  its generalizations \cite{Sun:2011fq,Kang:2012fn}. 
In section \ref{Originalcounter-section} we discuss the counter-example found in \cite{Sun:2019bnd} and we find some other cases by adding new fields with different values of the R-charge. In section \ref{Newcounter-section} we obtain new types of counter-examples with mass terms in the superpotential. 
% Such mass terms, absent in the original model and in the generalizations discussed in section \ref{Originalcounter-section}, are appealing because they are ubiquitous in  WZ models.
 In section \ref{Conditions-section} we study a set of sufficient conditions for finding counter-examples to \cite{Sun:2011fq,Kang:2012fn}.
  In section \ref{Non-gener-section} we  
  focus on the requirement of genericity 
  of the superpotential. This requirement appears in each version of the NS theorem and its importance in the proof is crucial. 
 We discuss how the assumptions of  section \ref{Conditions-section} lead us,
 after a non-linear change of basis in field space,  to a non-generic form of the superpotential, 
 even in absence of fine-tuning of the couplings.
 In this way we show that on one side our models do not violate the original NS theorem but on the other
side they are real counter-examples of \cite{Sun:2011fq,Kang:2012fn}.
%  In section \ref{Non-gener-section} we show that the hypotheses of section \ref{Conditions-section} violate the assumptions of %in absence of fine-tuning of the couplings. 
 %Since R-symmetry is not spontaneously broken in our models these are not counter-examples to the original theorem. However they are sufficient to say that the modified version of the theorem is not valid, or at least that should include a more precise definition of genericity.

\section{The NS theorem and its generalizations}
\label{NS-section}

The NS  theorem is a useful tool in model building, because it provides necessary and sufficient conditions for SUSY breaking in 4d $\mathcal{N}=1$ WZ models.
The necessary and sufficient conditions were discussed separately  in the original version \cite{Nelson:1993nf},
while in \cite{Sun:2011fq,Kang:2012fn} the authors tried to improve matters formulating the statement with a more appealing \emph{if and only if}. 
In this section we report both versions of the theorem.

\subsection{The original version}

Here we review  the original version of the NS theorem and its proof.
\begin{theorem}[NS theorem]\label{NS}
	Let us consider a 4d $\mathcal{N}=1$ renormalizable WZ model  with a generic superpotential. Then the presence of an R-symmetry is a necessary condition to have SUSY vacua while having a spontaneously broken R-symmetry is a sufficient condition.
\end{theorem}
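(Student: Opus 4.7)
My plan is to prove the two implications separately, reading the statement (as in the abstract) as saying that R-symmetry is a necessary condition for SUSY breaking while a spontaneously broken R-symmetry is a sufficient one. Both arguments hinge on the holomorphy of $W$ and on a holomorphic change of variables that isolates the R-charged directions in field space.

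For the sufficient condition, suppose the model admits an R-symmetry and that at a candidate SUSY vacuum some field $X$ with $R[X]=r\neq 0$ acquires a non-vanishing VEV. I would introduce R-neutral variables $Y_i = \Phi_i / X^{r_i/r}$ for the remaining fields, so that $W$, being of R-charge $2$, takes the factorized form $W = X^{2/r}\, f(Y_1,\dots,Y_{n-1})$ with $f$ R-invariant. The F-term conditions $\partial_X W=0$ and $\partial_{Y_i} W=0$ then reduce, at $X\neq 0$, to $f(Y)=0$ together with $\partial f/\partial Y_i=0$, i.e.\ $n$ holomorphic equations in $n-1$ unknowns. Genericity of the superpotential, hence of $f$, is precisely what makes this overdetermined system inconsistent, forcing at least one F-term to be non-zero and SUSY to be spontaneously broken.

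For the necessary condition I would argue by contrapositive: in the absence of any R-symmetry, I want to exhibit a SUSY vacuum. The point is that without selection rules a renormalizable superpotential contains the most general linear, quadratic and cubic couplings compatible with the other (non-R) symmetries, so the F-term equations $\partial_i W = 0$ are $n$ generic holomorphic equations in $n$ complex unknowns with no residual mismatch in the count. I would then invoke an existence argument — either an iterative perturbative construction around a zero of the leading piece, or a Bezout-type bound in the appropriate weighted projective completion — to conclude that such a system generically admits a common zero, producing a SUSY vacuum.

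The main obstacle in both directions is the rigorous handling of the word \emph{generic}. In the sufficient direction one must exclude the measure-zero locus of $f$'s for which $f=0$ and $df=0$ accidentally share a root, and in the necessary direction one must exclude the similarly non-generic locus where $dW=0$ has no solution at all. Renormalizability, by bounding the polynomial degree of $W$, is what ultimately confines both exceptional sets to proper algebraic subvarieties of the finite-dimensional coupling space, and this is exactly the step that the subsequent sections of the paper will probe and challenge when the revised $R=2$ versus $R=0$ counting of \cite{Sun:2011fq,Kang:2012fn} is considered.
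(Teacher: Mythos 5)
Your proposal is correct and follows essentially the same route as the paper: the holomorphic change to R-neutral variables, the factorization $W = X^{2/r}\, f(Y_i)$ giving $n$ equations in $n-1$ unknowns for the sufficient direction, and the matching count of equations and unknowns (via contrapositive) for the necessary direction, with both readings of the statement agreeing with the abstract's intended ``necessary for SUSY breaking.'' The only cosmetic difference is that the paper explicitly splits the no-R-symmetry case into ``no continuous symmetry'' ($n$ equations in $n$ unknowns) and ``non-R global symmetry'' (reduced to $n-l$ equations in $n-l$ invariant variables), whereas you compress this into the remark that there is no residual mismatch in the count.
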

This asymmetry between the necessary and sufficient conditions is  what motivated the authors of \cite{Sun:2011fq} and \cite{Kang:2012fn} to improve the statement. 
We turn now to the proof of the above theorem.

\begin{proof}[Proof of NS theorem]
In a purely chiral model we have three possibilities: either the model does not posses any global continuous symmetry or the model possesses a continuous global non-R symmetry or it possesses a continuous R-symmetry. We will now show that, for a generic superpotential, the first two cases lead to the existence of solutions of the SUSY vacuum equations, while in the last case SUSY is generically broken.
	
In the first case where no continuous global symmetry is present the SUSY vacuum equations for a chiral model constructed out of $n$ fields are $n$ equations in $n$ unknowns. Hence for a generic superpotential we shall be able to solve them, yielding SUSY vacua.
	
In the second case the holomorphy of the superpotential tells us that if there are $l$ generators of global symmetries, then the superpotential is a holomorphic function of $n-l$ variables. For $l=1$ and with $Q(\phi_n)=q_n \neq0$ these variables are $X_i=\frac{\phi_i}{\phi_n^{q_i/q_n}}$. Then the SUSY vacuum equations are $n-l$ equations in $n-l$ unknowns so, as in the previous case, SUSY is generically a symmetry of the vacuum.
	
In the third and last case the superpotential is a charged object, specifically it has R-charge 2. We suppose that $\phi_n$ gets a non-vanishing expectation value on the absolute minimum of the potential and denote its R-charge by $q_n$. We may then define the uncharged fields $\chi_i=\phi_i/\phi_n^{q_i/q_n}$. Thus the superpotential can be written as follows:
\begin{equation}
W = \phi_n^{2/q_n} f(\chi_i)
\end{equation}
The SUSY vacuum equations are then:
\begin{eqnarray}
f(\chi_i) = 0, \quad  \partial_j f(\chi_i)= 0
\end{eqnarray}
We have then $n$ equations in $n-1$ unknowns, so for a generic function $f(\chi_i)$ SUSY is broken at the global minimum of the potential. 
\end{proof}

\subsection{The revised version}
\label{revth}

We turn now to state and prove the modified version of the theorem which appeared in \cite{Sun:2011fq} and \cite{Kang:2012fn}. The focus of this revised version of the theorem is on the form of the most general 
superpotential as a sum of monomials of the fields. Here and in the following by general 
superpotential we mean that, for a given set of fields, every term compatible with R-symmetry is included and that there is no fine-tuning of the couplings.
We refer the reader to the statement given in \cite{Kang:2012fn}.

\begin{theorem}[The NS theorem revised]\label{NS_rev}
	In a Wess-Zumino model with a generic perturbative superpotential, SUSY is spontaneously broken at the global minimum if and only if the superpotential has an R-symmetry and there are more R-charge 2 fields than R-charge 0 fields for any consistent R-charge assignment \footnote{Observe that here we follow \cite{Sun:2019bnd} and include all possible terms compatible with R-symmetry. Hence we do not consider the possibility of having other global, non-R, $U(1)$-symmetries.}.
\end{theorem}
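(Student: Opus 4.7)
The plan is to expand $W$ in powers of the R-charge $2$ fields and reduce F-flatness to a counting argument for polynomial systems. Split the chiral fields into three groups: $\Phi_i$ with $R[\Phi_i]=2$ for $i=1,\dots,N_2$; $X_a$ with $R[X_a]=0$ for $a=1,\dots,N_0$; and the remaining $Y_r$ with R-charges different from $0$ and $2$. Since $W$ has R-charge $2$ and is polynomial, one can write
\begin{equation*}
W = W_0(Y) + \sum_{i=1}^{N_2} \Phi_i\, f_i(X,Y) + \mathcal{O}(\Phi^2),
\end{equation*}
where $W_0$ collects the R-charge $2$ monomials built purely from the $Y_r$'s (absent unless their charges admit such a combination) and the $f_i$ are R-charge $0$ polynomials in $(X,Y)$.

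For the sufficient-to-break direction ($N_2>N_0 \Rightarrow$ SUSY broken), I would try the ansatz $\Phi_i=0$ together with a choice $Y_r=Y_r^\star$ that solves the F-terms coming from $W_0$. At this point the derivatives $\partial W/\partial X_a$ and $\partial W/\partial Y_r$ vanish automatically, because every remaining term in $W$ is at least linear in $\Phi$. The only surviving conditions are $f_i(X,Y^\star)=0$: an over-determined system of $N_2$ polynomial equations in $N_0$ unknowns $X_a$, with no common zero for generic couplings. Hence the F-term potential is strictly positive everywhere and SUSY breaks at its global minimum. The converse is handled symmetrically: the necessity of an R-symmetry is Theorem \ref{NS}, while for $N_2\le N_0$ the same reduction yields $N_2$ equations in $N_0\ge N_2$ unknowns, which for generic coefficients admit a solution by a dimension count on polynomial maps $\mathbb{C}^{N_0}\to\mathbb{C}^{N_2}$, producing a SUSY vacuum at $(\Phi,X,Y)=(0,X^\star,Y^\star)$.

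The hard part is treating the $Y$-sector and the $\mathcal{O}(\Phi^2)$ tail honestly. One must show that (i) a $Y^\star$ can always be chosen so that the ansatz trivialises every F-term except the $f_i=0$ system, and (ii) no supersymmetric vacuum lurks at $\Phi,Y\neq 0$ that the ansatz misses. A clean route is a holomorphic field redefinition that removes $W_0$ and absorbs higher $\Phi$-corrections into shifts of the $f_i$, preserving the genericity of $f_i$ as R-charge $0$ functions of $(X,Y)$. This is precisely where genericity can silently fail when pairs of $Y$-fields with opposite nonzero R-charges acquire correlated VEVs, which is the loophole exploited by \cite{Sun:2019bnd} and the subject of the remaining sections.
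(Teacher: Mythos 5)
Your overall strategy---expand $W$ in powers of the R-charge-$2$ fields, evaluate the F-terms on a reference configuration, and reduce SUSY breaking to counting $N_2$ equations against $N_0$ unknowns---is the same as the paper's, which writes $W = X_i f^i(Y_j) + W_1$ and ends with the system $f^i(Y_j)=0$. But your argument for the breaking direction has a genuine gap, one you flag as ``the hard part'' and then do not close. Showing that the slice $(\Phi,X,Y)=(0,X,Y^\star)$ carries an overdetermined system proves only that no SUSY vacuum lies on that slice; it does not make ``the F-term potential strictly positive everywhere.'' The step that does the work in the paper is an appeal to Theorem~\ref{NS}: if \emph{any} field with nonzero R-charge (your $\Phi_i$ or $Y_r$) condenses, the R-symmetry is spontaneously broken and the original rescaling argument of \cite{Nelson:1993nf} yields $n$ equations in $n-1$ unknowns, so generically no SUSY vacuum exists at such a point. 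Only after this does the problem collapse to $\Phi_i=Y_r=0$ and the $N_2$ vs.\ $N_0$ count become a statement about the global minimum. This is not a pedantic omission: the footnote to the paper's proof notes that precisely this step tacitly assumes no R-neutral product of oppositely charged fields gets a VEV, which is the loophole of \cite{Sun:2019bnd} that the rest of the paper exploits. Your closing paragraph correctly locates the danger but supplies no argument excluding vacua off the ansatz, and the proposed holomorphic redefinition ``preserving genericity of the $f_i$'' is exactly what fails in the counter-examples.

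Two smaller points. First, the decomposition $W=W_0(Y)+\sum_i\Phi_i f_i(X,Y)+\mathcal{O}(\Phi^2)$ omits the $\Phi$-independent monomials containing both $X_a$'s and $Y_r$'s (e.g.\ $X_a Y_r Y_s$ with $R[Y_r]+R[Y_s]=2$, the paper's $c^{ijk}Y_iA_jA_k$ term); for these, $\partial W/\partial X_a$ does \emph{not} vanish automatically at $\Phi=0$ unless $Y^\star=0$. Second, $Y^\star=0$ is always available and is the right choice: since no $Y_r$ has R-charge $2$, $W_0$ has no linear term, so the origin is a critical point of $W_0$ and kills every problematic derivative. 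With that choice your existence direction ($N_0\ge N_2$ implies a SUSY vacuum at $(\Phi,X,Y)=(0,X^\star,0)$) goes through and coincides with the paper's argument.
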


\begin{proof}
	The first thing to be done is a matter of notation, we will follow the conventions of \cite{Kang:2012fn}. We shall call $X_i$ all the R-charge 2 fields, $Y_j$ all the R-charge 0 fields and $A_k$ all the other fields. A generic perturbative superpotential can then be expanded as a sum of monomials in the following way. We use the summation convention on repeated indices  and sums are understood to be taken only on those subsets of the $A_k$ fields which yield an R-charge 2 superpotential.
	\begin{eqnarray}\label{generalW_revNS}
		W & = & X_i f^i(Y_j) + W_1\\
		W_1 & = & a^{ijk} X_i X_j A_k + b^{ijk} X_i A_j A_k + c^{ijk} Y_i A_j A_k+  d^{ijk} A_i A_j A_k + m^{ij} A_i A_j
		\nonumber
	\end{eqnarray}
If some of the $X_i$ or $A_k$ fields get a VEV we obtain SUSY breaking through the original version of the NS theorem. Thus the only way to get SUSY vacua is to let only the R-uncharged fields $Y_j$ get a VEV
\footnote{Here is a crucial observation: there is a further assumption in this step of the proof, i.e. no product of 
pairs of charged fields can get a VEV. In the following we will show that in the cases in which this product is uncharged under R-symmetry counter-examples are possible.}
. Setting $X_i=A_k=0$ is sufficient for setting to zero all the partial derivatives of $W_1$. We are then left with the $N_X$ equations $f^i(Y_j)=0$ in $N_Y$ unknowns. These have a solution for generic function $f^i$ if and only if $N_Y \geq N_X$, otherwise SUSY is broken, as was to be proved.
	
\end{proof}

\section{The first counter-example}\label{Originalcounter-section}

In this section we discuss  the counter-example presented in \cite{Sun:2019bnd}.
This is a counter-example to Theorem \ref{NS_rev} and it is obtained by considering 
one field with  R-charge 2 and no field with vanishing R-charge. 
Thus we should expect this model to display a SUSY breaking vacuum, yet we are able to solve the SUSY vacuum equations. The superpotential is:
\begin{equation}\label{W-original}
W = \mu^2 z + a z^2 \phi_1 + b z \phi_2 \phi_3 + c \phi_1^2 \phi_2
\end{equation}
The superpotential (\ref{W-original})  has  only one continuous internal symmetry
corresponding to  the R-symmetry. The absence of global, non-R $U(1)$-symmetries ensures the R-charge assignments are consistent. 
In \cite{Sun:2019bnd} the authors started from the four fields with the following R-charges and wrote down (\ref{W-original}) as the most general superpotential consistent with the R-charge assignments.
\begin{align}\label{R-charges-original}
R(z) & = +2; & R(\phi_1) & = -2; & R(\phi_2) & = +6; & R(\phi_3) & =-6
\end{align}
The reasoning can be reversed: we may start from (\ref{W-original}) and check that the R-charges are given precisely by (\ref{R-charges-original}). This freedom of reversing the starting point is due to the absence of non-R $U(1)$ global symmetries 
and to the absence of fine-tuning in the couplings. 
One then solves the SUSY vacuum equations:
there are SUSY vacua at $z=\phi_1=0$ and $\phi_2 \phi_3 = -\frac{a}{c}$. The model (\ref{W-original})  satisfies the hypotheses of the second theorem above, but  it displays SUSY vacua, so this is a counter-example. 

We shall now study the features of the superpotential (\ref{W-original}), because it is helpful to 
find a generalization. 
First, (\ref{W-original}) has only one dimensionful coupling, namely $\mu^2$. 
Another feature of the model is that the F-term equation of the field $z$ forces 
the product of two fields, with opposite R-charge, to get a VEV.
This will be a common property of the models  that we are going to discuss, and 
as a consequence we will denote with the letter $z$ the R-charge 
 2 fields and we refer to it as $z$-field, while other fields will be denoted with $\phi_i$.

The other interesting feature is that the two fields whose product gets a VEV, $\phi_2$ and $\phi_3$, appear only linearly in the superpotential.
Moreover, while the superpotential term $\phi_2 \phi_1^2$ is allowed, the coupling between $\phi_3$ and  $\phi_1^2$ is forbidden.
They  would either break R-symmetry or require the addition of another field. This remark will be important for finding a new counter-example. In this sense the field content of the model is made of one $z$-field  and three $\phi_i$ fields. Two of these  fields appear linearly in $W$ and their product gets a VEV.

\subsection{Generalizing the counter-example}

Another counter-example may be given by adding two more fields to the original superpotential (\ref{W-original}) 
with $R$-charges $R(\phi_4)=-R(\phi_5)=+14$.
The superpotential is 
\begin{equation}\label{new-counter-W}
W = \mu^2 z + a z^2 \phi_1 + b z \phi_2 \phi_3 + c \phi_1^2 \phi_2 + d \phi_3^2 \phi_4 + f z \phi_4 \phi_5
\end{equation}
By solving the F-terms we find SUSY vacua at $z=\phi_1=\phi_3=0$ and:
\begin{equation}
	\mu^2 + f \phi_4 \phi_5 = 0
\end{equation}
while $\phi_2$ is a modulus.  
The product which gets a VEV is now $\phi_4 \phi_5$ instead of $\phi_2 \phi_3$.
Both  $\phi_4$ and $\phi_5$ appear only linearly in the superpotential, while $\phi_3$ now couples 
quadratically to  $\phi_4$. 
We can generalize this construction to arbitrary higher values of the R-charges. 

Observe that we could have also considered letting $\phi_2$ appear quadratically 
in $W$ adding a pair of fields with R-charges $R(\phi_6)=-10$ and $R(\phi_7)=+10$ and superpotential 
\begin{align}\label{new-counter-W2}
W = \mu^2 z + & a z^2 \phi_1 + b z \phi_2 \phi_3 + c \phi_1^2 \phi_2 + d \phi_3^2 \phi_4
 + f z \phi_4 \phi_5 + g \phi_2^2 \phi_6 + h z \phi_6 \phi_7
\end{align}
Studying the F-terms one can see that the same analysis made for (\ref{new-counter-W}) applies to  
(\ref{new-counter-W2})  as well.

\section{A new counter-example}\label{Newcounter-section}

In section \ref{Originalcounter-section} we have discussed some generalizations
of the superpotential (\ref{W-original}),  studied in \cite{Sun:2019bnd}
as a prototypical counter-example of \cite{Sun:2011fq,Kang:2012fn}.
In this section we discuss a new type of counter-example, 
 by allowing the existence of another dimensionful coupling, namely a mass term.
 Here and in the rest of the paper we denote the massive fields with 
R-charge equal to one by the greek letter $\chi$.
We start considering five fields with R-charges
\begin{eqnarray}\label{massive_R_charges}
	R[z]= + 2, \quad R[\chi]=+1, \quad  R[\phi_1] = -1, \quad R[\phi_2]= +4, \quad  R[\phi_3]= -4
\end{eqnarray}
The most general superpotential consistent with this $U(1)_R$-symmetry is
\begin{equation}
\label{W-mass}
W = {\mu}^2 z + \frac{1}{2} m {\chi}^2 + a z \chi \phi_1 + b z \phi_2 \phi_3 + c {\phi_1}^2 \phi_2
\end{equation}
In analogy with the previous cases, we have no other continuous symmetry except the R-symmetry. 
There is a  discrete $\mathbb{Z}_2$ symmetry under which $\chi \rightarrow - \chi$ and $\phi_1 \rightarrow - \phi_1$. 
We find SUSY vacua for $z=\chi=\phi_1=0$ and
\begin{equation}\label{master_eq_mass}
\langle \phi_2 \phi_3 \rangle = - \frac{{\mu}^2}{b}
\end{equation}
This equation
\footnote{See \cite{Carpenter:2008wi} for a similar equation, relating the
 mechanism of runaways in  F-term SUSY breaking and the complexification of $U(1)_R$.}
represents a condition for the product of two fields with opposite R-charges, as in the examples discussed in \ref{Originalcounter-section}.

\subsection{Generalization}

In analogy with subsection \ref{new-counter-W} we can generalize the superpotential (\ref{W-mass}) by adding other fields.
We use the following recipe: we add pairs of new fields with opposite R-charges.  
We couple each new pair of fields in the superpotential such that their R-charges are  fixed. 
These new $W$ terms  must involve one of  the fields that appeared in the equation (\ref{master_eq_mass}), either  
$\phi_2$ or $\phi_3$.
It is crucial that this field appears quadratically in the final $W$, because it allows us to set it to zero by solving  the F-terms. 
Then the equation (\ref{master_eq_mass}) does not hold anymore.
At this step we can add (at least) two new fields, that play the same role played before by $\phi_2$ and $\phi_3$.
This is done by coupling either $\phi_2^2$ or $\phi_3^2$, with a field having the correct R-charge to yield an R-symmetric superpotential term.
 For example, if we consider $\phi_2^2$, then we must add to the theory a field $\phi_4$ with R-charge -6.
In addition we must add another field, $\phi_5$, such that $R[\phi_4]=-R[\phi_5]$.
  The new pair of fields couples naturally to $z$ with a term $d z \phi_4 \phi_5$.  We have then constructed the following superpotential:
\begin{align}
	W =  \mu^2 z & + \frac{1}{2} m \chi^2 + a z \chi \phi_1 + b z \phi_2 \phi_3
	 + c \phi_1^2 \phi_2 + d z \phi_4 \phi_5 + f \phi_2^2 \phi_4 
\end{align}
When we solve for the SUSY vacua  $\phi_2$ is set to zero. Furthermore also $z$, $\chi$ and $\phi_1$ are set to zero and we are left with 
the equation
\begin{equation}
\label{master_eq_mass_gen}
 \mu^2 + d \phi_4 \phi_5 = 0
 \end{equation}
that plays the role of (\ref{master_eq_mass}). 
 In this case we have a flat direction, parameterized by $\langle \phi_3 \rangle$.
 
We could also have started from  the superpotential (\ref{W-mass}) 
considering a coupling involving the field $\phi_3$ quadratically. 
This results in the addition of a new pair of fields of R-charges $\pm10$. 

If we consider both the pairs with R-charges $\pm 6$ and $\pm 10$ we observe that
a further mixing term is allowed by the R-symmetry.
We add this term to the superpotential because of the genericity assumption.
The field with R-charge $+6$ appears quadratically in the superpotential and the generalization of the equations 
 (\ref{master_eq_mass}) and  (\ref{master_eq_mass_gen}) involves only the fields with R-charge $\pm 10$.

The  key property that emerges from this analysis is that it is necessary to have 
 at least two fields with opposite R-charges and that
 these fields only appear linearly in the superpotential.
 The genericity assumptions than forces the coupling of these fields
 with the  R-charge 2 $z$-field. The F-term equation for the $z$-field implies the existence of SUSY vacua.
 %
 %
 %
 %%%%%%%%%%%%%%%%%%%%%%%%%%%%
 \section{Conditions for violating the revised-NS theorem}
 \label{Conditions-section}
 %%%%%%%%%%%%%%%%%%%%%%%%%%%%
 %
 %
 %
In this section we discuss the sufficient conditions for the existence of  counter-examples of \cite{Sun:2011fq,Kang:2012fn}. 
We first state the general result, then we motivate it through an example and eventually we  prove the result.

\begin{theorem}[Existence of counter-examples to \cite{Sun:2011fq,Kang:2012fn}]
\label{exis-theorem}
Consider the most general perturbative R-symmetric superpotential  $W$ 
with one field $z$ with R-charge $2$ and without any field with $R$ charge $0$. 
Then the existence of (at least) a pair of fields of opposite R-charge 
 both appearing only linearly in $W$ and not involved in any mass term
 is a sufficient condition for having SUSY vacua. 
On the SUSY vacua the product of such a pair of fields gets a VEV.
\end{theorem}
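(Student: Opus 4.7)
The plan is to solve all F-term equations by exhibiting an explicit ansatz. Label the distinguished pair $\phi_a,\phi_b$ with $R[\phi_a]=-R[\phi_b]\neq 0$, both appearing only linearly in $W$ and absent from any mass term, and let $\phi_c$ stand for any other non-$z$ field. Two monomials are guaranteed by R-symmetry and genericity to appear in $W$ with nonzero coefficients: the tadpole $\mu^2 z$ (since $R[z]=2$) and the cubic $\lambda\,z\phi_a\phi_b$ (since $R[z]+R[\phi_a]+R[\phi_b]=2$). These two monomials are what will engineer the vacuum: the $z$ F-equation reads $\mu^2+\lambda\,\phi_a\phi_b+\cdots=0$, and it will be solved by $\langle\phi_a\phi_b\rangle=-\mu^2/\lambda$ as soon as the ellipsis is arranged to vanish.

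Accordingly, the ansatz I would propose is $z=0$ together with $\phi_c=0$ for every field other than $\phi_a,\phi_b$, leaving the single scalar equation $\phi_a\phi_b=-\mu^2/\lambda$ to be imposed. The substantive step is then to verify that every F-term other than $\partial_z W$ vanishes identically on this locus, which I would organize through a short classification of the cubic and mass monomials in $W$ using the constraint $R[\mathrm{term}]=2$ and the structural hypotheses. For $\partial_{\phi_a}W$ (and symmetrically $\partial_{\phi_b}W$), linearity of $W$ in $\phi_a$ restricts the surviving monomials to cubics of the form $z\phi_a\phi_?$, whose derivative carries the factor $z$, and $\phi_a\phi_i\phi_j$ with $\phi_i,\phi_j\neq z$; in the latter case R-symmetry and linearity exclude both $\phi_i$ and $\phi_j$ from lying simultaneously in $\{\phi_a,\phi_b\}$ ($\phi_a^2\phi_b$ and $\phi_a^3$ violate linearity, while $\phi_a\phi_b^2$ would require $R[\phi_b]=2$ and hence $\phi_b=z$), so a factor outside $\{\phi_a,\phi_b\}$ always survives and is set to zero by the ansatz. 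For $\partial_{\phi_c}W$, a mass monomial $m\phi_c\phi_j$ has $\phi_j\notin\{\phi_a,\phi_b\}$ by the mass-term hypothesis, while a cubic $\phi_c\phi_i\phi_j$ with both $\phi_i,\phi_j\in\{\phi_a,\phi_b\}$ would force $R[\phi_c]=2$ and hence $\phi_c=z$, contradicting the choice of $\phi_c$. The $z$-equation then receives $\mu^2$, $\lambda\,\phi_a\phi_b$, and possibly other cubics $z\phi_i\phi_j$ with $R[\phi_i]+R[\phi_j]=0$; each such extra cubic has $\{\phi_i,\phi_j\}\neq\{\phi_a,\phi_b\}$ and therefore carries a zero factor at the ansatz, leaving $\mu^2+\lambda\,\phi_a\phi_b=0$.

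The main obstacle is the completeness of this case analysis, in particular making sure that no accidental R-symmetric monomial is overlooked when non-distinguished fields share an R-charge with $\phi_a$ or $\phi_b$ or when the R-charges take special rational values admitting monomials such as $\phi_c\phi_a^2$ with $R[\phi_c]+2R[\phi_a]=2$. Under the implicit assumption of generic R-charges, which holds in all the integer-R-charge examples of the previous sections, these coincidences do not occur and the enumeration above is exhaustive; the rest of the proof is then a direct verification and yields the SUSY vacuum $\langle\phi_a\phi_b\rangle=-\mu^2/\lambda\neq 0$ advertised in the statement.
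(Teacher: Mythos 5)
Your proposal is correct and follows essentially the same route as the paper: set $z$ and every field outside the distinguished pair to zero, use linearity, the opposite-R-charge condition and the absence of mass terms to check that all F-terms except $\partial_z W$ vanish on this locus, and solve the remaining equation by $\langle\phi_a\phi_b\rangle=-\mu^2/\lambda$. Your closing worry about special R-charge coincidences (e.g.\ monomials like $\phi_c\phi_a^2$) is already disposed of by the hypothesis that $\phi_a,\phi_b$ appear only linearly in the most general $W$, so no extra genericity assumption on the charges is needed.
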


An important assumption in this statement is the existence of a pair of fields of opposite R-charges 
that are not involved in any mass term. Let us see in an example what happens if this 
assumption is violated. We consider the superpotential
\footnote{In this case we label the $\phi$ fields with their R-charge, while $R[z]=2$ and $R[\chi]=1$.} 
\begin{equation}\label{trouble-W}
W = \mu^2 z + \frac{1}{2} m \chi^2 + a z \chi \phi_{-1} + b z \phi_{3} \phi_{-3} + m_\phi \phi_{-1} \phi_{3}
\end{equation}	
We see that for $m_\phi \neq 0$ the above superpotential breaks SUSY at the tree-level. Both the fields of the pair $\phi_3$
and $\phi_{-3}$ appear only linearly in $W$ and they have opposite R-charges. 
SUSY-breaking is due to the mss term involving $\phi_3$,  i.e. $m_\phi \phi_{-1} \phi_3$.

\begin{proof}
We denote the field with R-charge $2$ by $z$, the fields with R-charge $1$ by $\chi_i$ and all the remaining fields with other R-charges by $\phi_i$. 
We consider only one field with $R$ charge $2$ for reasons of naturalness, but  our results would not change if we 
considered other fields with $R$-charge $2$. 
The sums over the $\phi_i$ fields are taken only on those subsets of fields consistent with R-symmetry. 
The superpotential we are considering is thus constrained to be:
		
	\begin{equation}\label{general_W}
		\begin{split}
		W = \mu^2 z & + \frac{1}{2} a^{ij} z \phi_i \phi_j  + \frac{1}{2} m^{ij} \chi_i \chi_j + b^{ij} z \chi_i \phi_j + \frac{1}{2} c^i z^2 \phi_i  \\ 
		& + \frac{1}{3!} d^{ijk} \phi_i \phi_j \phi_k + \frac{1}{2} f^{ijk} \phi_i \phi_j \chi_k + \frac{1}{2} \lambda^{ij} \phi_i \phi_j
		\end{split}
	\end{equation}
Note that the superpotential (\ref{general_W})  is consistent with (\ref{generalW_revNS}) in absence of
 fields with vanishing R-charge.  The SUSY vacuum equations are 	
	\begin{align}
		\partial_z W & = \mu^2 + \frac{1}{2} a^{ij} \phi_i \phi_j + b^{ij} \chi_i \phi_j + c^i z \phi_i = 0 \\
		\partial_{\chi_i} W & = m^{ij} \chi_j + b^{ij} z \phi_j + \frac{1}{2} f^{jki} \phi_j \phi_k = 0 \\
		\partial_{\phi_i} W & = a^{ij} z \phi_j + b^{ji} z \chi_j + \frac{1}{2} c^i z^2 + \frac{1}{2} d^{ijk} \phi_j \phi_k + f^{ijk} \phi_j \chi_k + \lambda^{ij} \phi_j = 0 \label{phi}
	\end{align}
	We consider, without loss of generality, the case in which the pair of fields mentioned in the hypotheses are $\phi_1$ and $\phi_2$. The fields $\phi_{1}$ and $\phi_{2}$  may belong to two multiplets of fields which are degenerate in the R-charge. Of course in such a case there is no \emph{preferred pair}. We label the indices of the (possibly absent) remaining fields with the same R-charges
of $\phi_1$ and $\phi_2$ $\alpha_1$, $\beta_1$, $\dots$ and $\alpha_2$, $\beta_2$, $\dots$ respectively. Then we label the indices of the remaining $\phi_i$ fields which do not satisfy the hypotheses of the theorem with upper hats $\hat{i}$, $\hat{j}$, $\dots$.	 		
	The requirements that the fields  $\phi_{1}$ and $\phi_{2}$  appear linearly in the superpotential 
	and that they have opposite R-charge  guarantee that in the $d^{1jk} \phi_1 \phi_i \phi_j$ and $d^{2jk} \phi_2 \phi_i \phi_j$ terms, where we are not summing over repeated indices, all the non-vanishing terms are those like $d^{1\hat{j}\hat{k}} \phi_1 \phi_{\hat{i}} \phi_{\hat{j}}$ and $d^{2\hat{j}\hat{k}} \phi_2 \phi_{\hat{i}} \phi_{\hat{j}}$. 
 The same reasoning holds for the $f^{1jk}\phi_1 \phi_i \chi_k$ terms as well.

 Eventually, the fields $\phi_{1}$ and $\phi_{2}$ are not coupled 
to the $\lambda^{ij}$ term. 
The F-term for $\phi_1$ then reads
	\begin{equation}
		\partial_1 W = a^{12} z \phi_2 + a^{1 \alpha_2} z \phi_{\alpha_2} + \frac{1}{2} d^{1\hat{j}\hat{k}} \phi_{\hat{j}} \phi_{\hat{k}} + f^{1\hat{j} k} \phi_{\hat{j}} \chi_k,
	\end{equation}
		%
	%with a totally analogous equation for $\phi_2$. Then setting:
	And a similar equation holds for $\phi_2$. 	
	Then setting
	\begin{align}
		z & = 0 & \chi_i & = 0 & \phi_{\widehat{i}} & = 0 & \phi_{\alpha_1} = \phi_{\alpha_2} = 0 
			\end{align}
these SUSY vacuum equations, as well as those for the $\chi_i$, $\phi_{\alpha_{1}}$, $\phi_{\alpha_{2}}$ 
 and $\phi_{\hat{j}}$ fields, are satisfied. Then we are left only with the $z$ F-term equation, which gives:
\begin{equation}
\label{master-eq}	
{\mu}^2 + a^{12} \phi_1 \phi_2 = 0
\end{equation}
representing the equation that turns on  the VEV $\langle \phi_1 \phi_2 \rangle$.		
\end{proof}

\section{On the non-genericity of the counter-examples}\label{Non-gener-section}

In this section we show that the models discussed here are counter-examples of the 
revised versions of the NS theorem, while they do not violate the original theorem.
The reason is that they are non-generic model in the sense of \cite{Nelson:1993nf}.
In order to see this non-genericity  we exploit a non-linear change of variables,  inspired to the one 
performed in the  NS theorem. Actually there is a difference in such a rescaling: while in the case of NS the author 
rescaled with respect of a field that gets a VEV, here we rescale with respect of a field that represents a
flat direction of the scalar potential, even if it this field is nowhere vanishing.
This allows us to show that the superpotentials studied here  are non-generic functions of the rescaled fields  
and that the origin of the rescaled fields  space is a SUSY vacuum.

%
%
%
%%%%%%%%%%%%%%%%%%%%%
\subsection{The original counter-example}
\label{nongen1}
%%%%%%%%%%%%%%%%%%%%%
%
%
%
We start our analysis by considering again the superpotential
(\ref{W-original}).
We consider a field that does not vanish on the vacuum, e.g. $\phi_2$.
We then redefine the  fields as
\begin{align}
	\varphi & = \frac{z}{\phi_2^{1/3}} & \varphi_1 & = \phi_1 \phi_2^{1/3} & \varphi_2 & = \phi_2^{1/3} & \varphi_3 & = \phi_2 \phi_3
\end{align}
The new fields are all un-charged with respect to the R-symmetry, except $\varphi_2$.
The superpotential (\ref{W-original}) becomes
\begin{equation}\label{new-china}
	W = \varphi_2 \left[ \mu^2 \varphi + a \varphi^2 \varphi_1 + b \varphi \varphi_3 + c \varphi_1^2 \right]
\end{equation}
The product $\phi_2 \phi_3$, that parameterizes a moduli-space of SUSY vacua, becomes $\varphi_3$
 in the new basis.  
 We expand this field as
   $\varphi_3=\Phi_0 + \delta \varphi_3$, where $\Phi_0 = - \frac{\mu^2}{b}$.
   In this way we can express the superpotential (\ref{new-china}) as
\begin{equation}\label{new-china2}
	W = \varphi_2 \left[ a \varphi^2 \varphi_1 + b \varphi \delta \varphi_3 + c \varphi_1^2 \right]
\end{equation}

The original argument of NS is that the superpotential can be re-written in the upper form $W=x f(y_i)$, where $x$ is the R-charged field getting a VEV and $y_i$ are all the other fields. The SUSY vacuum equations translate then in $n$ equations in $n-1$ unknowns, so SUSY is spontaneously broken for a generic function $f(y_i)$.

In the present case (\ref{new-china2}) the situation is a bit different. The crucial fact is that the function $f(y_i)$ is not a generic function  of the $y_i$'s, in fact it is a polynomial made up of monomials of degree 2 and 3
\footnote{Since the function $f(y_j)$ is a polynomial of degree 3 in its variables genericity means that it is a sum of monomials of any degree from 0 to 3. However having degree 0 monomials is possible only if the $x$ field is equal to the $z$ field in the original superpotential. Hence genericity requires linear terms in the $y_j$ variables to appear in the polynomial $f(y_j)$.}.
Here the $f(y_j)$ polynomial has no linear term because the only linear term in the 
un-rescaled superpotential, $\mu^2 z$, is canceled by the VEV of $\Phi_0$. 
 Such a function is vanishing at the origin, where it has vanishing gradient as well. It follows that we \emph{can} solve the SUSY equation.

%
%
%
%%%%%%%%%%%%%%%%%%%
\subsection{The new counter-example}
%%%%%%%%%%%%%%%%%%%
%
%
%
Here we extend the results of subsection \ref{nongen1} 
to the case of the \emph{massive} 
superpotential (\ref{W-mass}). 
We factor out $\phi_2$ as the non-vanishing field and re-define fields as
\begin{align}
	\varphi_1 & = \frac{z}{\phi_2^{1/2}} & \varphi_2 & = \frac{\chi}{\phi_2^{1/4}} & \varphi_3 & = \phi_1 \phi_2^{1/4} & X & = \phi_2^{1/2} & Y & = \phi_2 \phi_3
\end{align}
Writing $Y=Y_0 + \delta Y$. with $Y_0 = - \frac{\mu^2}{b}$, the superpotential takes the  form
\begin{equation}
	W = X \left( \frac{1}{2} \varphi_2^2 + a \varphi_1 \varphi_2 \varphi_3 + b \varphi_1 \delta Y + c \varphi_3^2 \right)
\end{equation}
Again we see that we have a polynomial in the R-uncharged rescaled fields made of monomials of degree 2 and 3, so our argument applies as above.

\subsection{The general case}

Here we generalize the result to the superpotential (\ref{general_W}).
We suppose that the pair of fields whose product gets a VEV are $\phi_1$ and $\phi_2$, and assume that they have R-charges $R(\phi_1)=-R(\phi_2)=q$. We define the rescaled fields as follows
\begin{align}
	X & = \phi_1^{2/q} & Y & = \phi_1 \phi_2 & Z & = \frac{z}{\phi_1^{2/q}} & X_i & = \frac{\phi_i}{\phi_1^{R(i)/q}} & K_i & = \frac{\chi_i}{\phi_1^{1/q}}
\end{align}
 where the indices of the fields $\phi_i$ run on $i \neq 1,2$.
  
We need to factor out the $X$ field to express the superpotential in the NS form.
We distinguish two possibilities for each term in $W$: either the $\phi_1$ field does not appear in such a 
term or it does. Let's consider three fields denoted as $\psi$, $\xi$, $\rho$ and a coupling constant $c$ and a general term in the superpotential $W  \supset c \psi \xi \rho$.
We define the R-charges of these fields as $R(\psi)  = p$,  $R(\xi)  = r$ and   $R(\rho)  = s $.
Then we rescale the fields as we have done before, where we 
denote  the rescaled field with the corresponding capitol greek letter.
We express the relations between the original fields and the rescaled ones in terms of the original fields for later convenience:
\begin{align}
	\psi & = \Psi \phi_1^{p/q} & \xi & = \Xi \phi_1^{r/q} & \rho & = P \phi_1^{s/q}
\end{align}
Hence the product of the three fields takes the form:
\begin{equation}
	\psi \xi \rho = (\phi_1)^{\frac{p+r+s}{q}} \Psi \Xi P  = X \Psi \Xi P,
\end{equation}
where in the last passage we exploited the fact that $p+r+s=2$. To consider the second case we set, for example, $\rho=\phi_1$, this time we find:
\begin{equation}
	\psi \xi \phi_1 = \Psi \Xi (\phi_1)^{\frac{p+r}{q}} \phi_1 = X \Psi \Xi
\end{equation}
The superpotential (\ref{general_W}) written in terms of $X$ and of the other rescaled fields
becomes
\begin{equation}
	\begin{split}
		W = & X \biggl\{ \mu^2 Z + a^{12} Z Y + a^{1i} Z X_i + a^{2i} Z Y X_i + \frac{1}{2} a^{ij} Z X_i X_j + \frac{1}{2} m^{ij} K_i K_j\\
		& + b^{ij} Z X_i K_j + \frac{1}{2} c^i Z^2 X_i + \frac{1}{2} d^{1ij} X_i X_j + \frac{1}{2} d^{2ij} Y X_i X_j + \frac{1}{3!} d^{ijk} X_i X_j X_k\\
		& + f^{1ij} X_j K_j + f^{2ij} Y X_i K_j + \frac{1}{2} f^{ijk} X_i X_j K_k + \frac{1}{2} \lambda^{ij} X_i X_j \biggr\}
	\end{split}
\end{equation}
Expanding as usual $Y=Y_0+\delta Y$ and using $\mu^2 + a^{12} Y_0 = 0$ we find that the function of the R-uncharged rescaled fields is a sum of monomials of degree two and three, as anticipated. 
The two requirements of the Theorem \ref{exis-theorem} force the non-genericity of the superpotential.

This is the reason for the breakdown of the arguments of the revised NS theorem.
Indeed throughout our calculations we have always kept the same hypotheses of \cite{Sun:2011fq} and \cite{Kang:2012fn}, i.e. of having a generic  superpotential with no fine-tuned couplings. Nevertheless we have just shown that the hypothesis of having more fields with R-charge two than fields with R-charge zero does not always imply tree-level SUSY breaking. This has been shown by using the same change of variables in field space which is exploited in the proof of the original NS theorem \cite{Nelson:1993nf} and by re-writing the superpotential in the form $W=x f(y_i)$. 
Under the hypotheses of the Theorem \ref{exis-theorem} the function $f(y_i)$ has no-linear 
terms and this feature is responsible for the presence of SUSY vacua, even if the models are generic 
in the sense of \cite{Sun:2011fq} and \cite{Kang:2012fn}.

\subsection{Comments on genericity}

In this final subsection we study the consequences of  relaxing some of the assumptions in Theorem \ref{exis-theorem}.
We focus on requirements of linearity and on the absence of mass terms for the fields which compose the pair with opposite R-charge.
We assume that at least one field is non-vanishing on the vacuum because of an equation like (\ref{master-eq}). 
This boils down to start solving the F-term for the field $z$ before the others.

We start by considering the case in which one of the two fields does not appear only linearly in the superpotential, 
but has a quadratic term as well, and we choose this field to be non-vanishing on the vacuum. 
We label such a field by $\phi_1$ and we assume that it has $R(\phi_1)=q$. 
The new superpotential term involving $\phi_1^2$ reads:
\begin{equation}
	W \supset g^i \phi_1^2 \phi_i,
\end{equation}
where the sum is understood to run on the fields having $R(\phi_i)=2 - 2q$. In the NS form this term becomes:
\begin{equation}
	g^i \phi_1^2 \phi_i = \phi_1^{2/q} g^i \phi_1^{2-2/q} \phi_i = \phi_1^{2/q} g^i X_i = X g^i X_i
\end{equation}
It follows that the polynomial now has a linear term: it becomes generic in the sense of NS 
and SUSY is broken at the origin, as expected from the Theorem \ref{NS}.

Similarly we can show that a linear term for the polynomial can be obtained 
by  adding a mass term for $\phi_1$.  The mass terms is $W \supset \lambda^{1j} \phi_1 \phi_j $
or in terms of the  rescaled fields $W \supset X \lambda^{1j} X_j$, and
it follows that we have a linear term.

Summarizing we have studied the connections between the various
assumptions and the requirement of genericity.
We conclude that all the models encountered so far 
are not counter-examples for the original NS theorem. 
This is connected, after field redefinition with respect of a nowhere vanishing field,
to the presence of an R-symmetry on the global minimum.

\section{Conclusions}

In this paper we considered  the original NS theorem \cite{Nelson:1993nf} and its revised versions \cite{Sun:2011fq,Kang:2012fn}. Inspired by  the counter-example to  \cite{Kang:2012fn},
recently obtained in \cite{Sun:2019bnd}, we found sufficient conditions for the breakdown of the arguments of the revised theorem. We carefully showed that this class of counter-examples does not invalidate the original NS theorem.

We also commented on the requirement of genericity, appearing in both the derivations of the NS theorem.
The sufficient conditions we have just mentioned fix
the structure of  our models after a non-linear change of variables, similar to the one  
carried out in the proof of the original theorem \cite{Nelson:1993nf}.

Even if here we assumed a notion of genericity based on the R-charges
our models fit in the assumptions of the second version of the NS theorem \cite{Kang:2012fn}. 
As observed above the models are counter-examples of these revised version because of the presence of 
products of pairs of fields with opposite R-charges getting a VEV.
This suggests us that forbidding this possibility should be added 
to the assumptions of \cite{Sun:2011fq,Kang:2012fn}.
 However, as already remarked, none of the models considered in 
 this paper are counter-examples of the original NS theorem.
 Indeed in the rescaled variables we have seen that the R-symmetry is unbroken in the vacuum.

We analyzed the 3d $\mathcal{N}=2$ case as well, even though we have not included such a discussion here. In this setting we can state and prove the counterparts of the two versions of the NS theorem. The same counter-examples of the second theorem arise in 3d as well. The analysis is complicated by the inclusion in the superpotential of quartic couplings.

As a conclusive remark we want to comment on the possibility of having supersymmetry breaking metastable states in the models discussed here. The authors of \cite{Sun:2019bnd} found that their model has saddle points only, we found the same result for our new counter-example (\ref{W-mass}). We would like to mention a further general result concerning SUSY-breaking metastable vacua. It is possible to show that in every model without R-charge -2 fields no SUSY-breaking metastable vacuum can be found near the origin of field space. 
 
\section*{Acknowledgments}

This work has been supported
in part by Italian Ministero dell'Istruzione, Universit\`a e Ricerca (MIUR),
in part by Istituto
Nazionale di Fisica Nucleare (INFN) through the ``Gauge Theories, Strings,
Supergravity" (GSS) research project and in part by
MIUR-PRIN contract 2017CC72MK-003.

\bibliographystyle{ytphys}
\bibliography{ref}

\end{document}